\title{ 
The 1.375 Approximation Algorithm for Sorting by 
Transpositions Can Run in $O(n\log n)$ Time}
\author{Jesun Sahariar Firoz$^{1,4}$ \and Masud Hasan$^{1,3}$ \and Ashik Zinnat 
Khan$^{1,4}$ \and M. Sohel Rahman$^{1,2,3}$}
\institute{Department of Computer Science and Engineering\\
Bangladesh Unibersity of Engineering and Technology\\
Dhaka-1000, Bangladesh
\and Department of Computer Science, King's College London, UK
\and \email{\{masudhasan,msrahman\}@cse.buet.ac.bd}
\and \email{\{jesunsahariar,ashrik\}@gmail.com}}
\begin{document}
\maketitle

\begin{abstract}

Sorting a Permutation by Transpositions (SPbT) is an important problem in Bioinformtics. 
In this paper, we improve the running time of the best known approximation algorithm for SPbT.
We use the permutation tree data structure of Feng and Zhu and improve the running time of the 
1.375 Approximation Algorithm for SPbT of Elias and Hartman (EH algorithm) to $O(n\log n)$. 
The previous running time of EH algorithm was $O(n^2)$.
%
%
%
%
\end{abstract}

\section{Introduction}\label{intro}
Transposition is an important genome rearrangement operation and 
Sorting a Permutation by Transpositions (SPbT) is an important problem in Bioinformtics. 
In the transposition operation, a segment is cut out of
the permutation and pasted in a different location. SPbT was first studied by Bafna and Pevzner \cite{15}, who discussed
the first $\mathrm{1.5}-$approximation algorithm which had quadratic running time. 
Eriksson et al. \cite{33} gave an algorithm that sorts any given permutation of
$n$ elements by at most $\frac{2}{3}{n}$ transpositions. Later, Hartman and Shamir 
used the concept of simplified breakpoint graph to design another $\mathrm{1.5}-$approximation 
algorithm with $O(n^2)$ running time~\cite{34}. They further used the splay tree to implement this simplified
algorithm and thereby reducing the time complexity to ${O}(n^{\frac{3}{2}}\sqrt
{\log n}$)~\cite{34}. Finally, Elias and Hartman presented an $\mathrm{1.375}-$approximation 
algorithm in \cite{36}, which is the best known approximation algorithm for SPbT in the literature so far. 
The running time of that algorithm~\cite{36} however is $O(n^2)$. Very recently, in \cite{31}, Feng and Zhu 
improved the running time of the $\mathrm{1.5}-$approximation algorithm of \cite{34} to
$O(n\log n)$ by introducing and using a new data structure named the \emph{permutation tree}. 
In this paper, with the help of the permutation tree data structure we improve the running 
time of the 1.375 Approximation Algorithm for SPbT of \cite{36} to $O(n\log n)$.

\section {Preliminaries}\label{pre}
A \textit{transposition} $\tau \equiv trans(i,j,k)$  on  $\pi = (\pi_0 . . . \pi_{n-1})$ 
is an exchange of two disjoint contiguous segments \textit{X} = $\pi_i, . . . , \pi_{j-1}
$ and \textit{Y} =$ \pi_j, . . . ,\pi_{k-1}$. Given a permutation $\pi$, the SPbT asks to 
find a sequence of transpositions to transform $\pi$ into the identity permutation such that 
the number of transpositions $t$ is minimized. The \textit{transposition distance} of a permutation $\pi$, denoted by $d(\pi)$,
is the smallest possible value of $t$. The breakpoint
graph $G(\pi)$ \cite{15} is an edge-colored graph on $2n$ vertices $\{l_0,r_0,l_1,r_1,
\ldots , l_{n-1},r_{n-1}\}.$ For every $0 \leq i \leq {n- 1}$, $r_i$ and
$l_{i+1}$ are connected by a grey edge, and for every $\pi_i$, $l_{\pi_i}$ and
$r_{\pi_{i-1}}$ are connected by a black edge, denoted by $b_i$. 
The breakpoint graph uniquely decomposes into c($\pi$) cycles. 
A \emph{$k-$cycle} has $k$ black edges; if $k$ is odd (resp. even), the cycle is \emph{odd} 
(resp. \emph{even}). Further, if $k<3$, it is \emph{short} and else, \emph{long}.
The number of odd cycles is denoted by $c_{odd}(\pi)$, and we define $\Delta c_{odd}(\pi,\tau) 
= c_{odd}(\tau $.$\pi) - c_{odd}(\pi)$, where $\tau.\pi$ denotes the result after $\tau$ is applied. A transposition
$\tau$ is a $k-$move if $\Delta c_{odd}(\pi,\tau) = k$. A cycle is called
\textit{oriented} if there is a 2-move that is applied on three of its black
edges; otherwise, it is \textit{unoriented}. If $G(\pi)$ contains only short cycles, 
then, both $\pi$ and $G(\pi)$ are called \textit{simple}. 
A permutation $\pi$ is \textit{2-permutation} (resp. \textit{3-permutation})
if $G(\pi)$ is contains only 2-cycles (resp. 3-cycles).
Permutations can be made simple by
inserting new elements into the permutations and thereby splitting the long
cycles~\cite{18}.

Two pairs of black edges $(a,b)$ and $(c,d)$ are said to \textit{intersect} if
their edges occur in alternated order in the breakpoint graph, i.e., in order
$a,c,b,d$. Cycles $C$ and $D$ intersect if there is a pair of black edges in $C$
that intersects with a pair of black edges in $D$. A \textit{configuration} of
cycles is a subgraph of the breakpoint graph that is induced by one or more
cycles. Configuration $A$ is \textit{connected} if for any two cycles $c_1$ and
$c_k$ of $A$ there are cycles $c_2, \ldots, c_{k-1} \in A $ such that, for each
$i \in [1,k- 1], c_i$ intersects with $c_{i+1}$. A \textit{component} is a
maximal connected configuration in a breakpoint graph.
The \textit{size} of a configuration or a component is the number of cycles it
contains.
A configuration (similarly, a component) is said to be unoriented if all of its
cycles are unoriented.
A configuration (similarly, a component) is \textit{small} if its size is at
most $8$;
otherwise it is \textit{big}.
Small components that do not have an $\frac{11}{8}$-sequence are called
\textit{bad small components}~\cite{36}.

In a configuration, an \textit{open gate} is a pair of black edges of a 2-cycle
or an unoriented 3-cycle that does not intersect with another cycle of that
configuration.
A configuration not containing open gates is referred to as a full
configuration.
An $(x, y)-$sequence of transpositions on a simple permutation (for $x\geq y$)
is a sequence of $x$ number of transpositions, such that at least $y$ of them are 2-moves
and that leaves a simple permutation at the end. 



A \textit{permutation tree} \cite{31} is firstly a balanced binary tree $T$ with
root $r$, where each internal node of $T$ has two children. 
The left and right children of an internal node $t$ are denoted by $L(t)$ and $R(t)$,
respectively. The height of $t$ is denoted by $H(t)$; a leaf node has height zero. 
Secondly, a permutation tree must correspond to a permutation.
The permutation tree corresponding to $\pi$ has $n$
leaf nodes, labeled by $\pi_1,\pi_2, \ldots, \pi_n$ respectively. 
Each node corresponds to an interval of $\pi$ and has a \emph{value} equal to the maximum number in
the interval. The interval corresponding to an internal node 
$t$ is be the concatenation of the two intervals corresponding to $L(t)$ and
$R(t)$. The height of the permutation tree of $\pi$ is bounded by $O(\log |\pi|)$. 
A permutation tree (Build operation) can be built in $O(|\pi|)$ time. 
Suppose, $T, t_1$ and $t_2$ correspond to $(\pi_1\pi_2$ $\ldots\allowbreak\pi_{m-1}\pi_m\pi_{m+1} 
\ldots \pi_n), ~(\pi_1\pi_2 \ldots \pi_m)$ and $(\pi_{m+1}\pi_{m+2}\allowbreak\ldots \pi_n)$, respectively. 
Then, $Join(t_1,t_2)$ returns $T$ in $O(H(t_1)-H(t_2))$ time and $Split(T,m)$ returns $t_l$ and $t_2$ in $O(\log n)$ time.

%
%


%

\section{Faster Running Time for Elias and Hartman's Algorithm}\label{algo}
The 1.375$-$approximation algorithm for SPbT of Elias and Hartman 
(referred to as the EH algorithm henceforth) is presented in Algorithm~\ref{Alg-EliasHartman}.

\begin{algorithm}[h!]
\caption{\label{Alg-EliasHartman}EH Algorithm}
\begin{algorithmic}[1]
\STATE \label{Step1}Transform permutation $\pi$ into a simple permutation
$\hat{\pi}$ .
\STATE \label{Step2} Check if there is a $(\mathrm{2,2})$-sequence. If so, apply it.
\STATE \label{Step3}While G($\hat{\pi}$) contains a 2-cycle, apply a 2-move.
\STATE \label{Step4}$\hat{\pi}$ is a 3-permutation. Mark all 3-cycles in G($\hat{\pi}$).

\WHILE{G($\hat{\pi}$) contains a marked 3-cycle $\mathrm{C}$\label{Step5}}
 \IF {$\mathrm{C}$ is oriented}
    \STATE \label{Step5a}apply a 2-move on it.
 \ELSE
    \STATE \label{Step5bStart}Try to sufficiently extend $\mathrm{C}$ eight times
    \IF{sufficient configuration has been achieved}
        \STATE apply an $\dfrac{11}{8}$-sequence.
    \ELSE
        \STATE it must be a small component. If an $\dfrac{11}{8}$-sequence is
still possible apply it.
        \IF{Applying a $\dfrac{11}{8}$-sequence is not possible}
            \STATE \label{Step5bEnd}This must be a bad small component. Unmark
all cycles of the component.
        \ENDIF
    \ENDIF
 \ENDIF
\ENDWHILE

\STATE\label{Step6} Now, G($\hat{\pi}$) contains only bad small components.While
G($\hat{\pi}$) contains \\ atleast 8 cycles,apply an $\dfrac{11}{8}$-sequence.
\STATE\label{Step7}  While G($\hat{\pi}$) contains a 3-cycle,apply a
(3,2)-sequence.
\STATE\label{Step8}  Mimic the sorting of $\pi$ using the sorting of
$\hat{\pi}$.

\end{algorithmic}

\end{algorithm}

Now, to achieve our goal we need to be able to use the permutation tree for applying $(x,y)-$sequence and $k-$move.
Additionally, given a pair of black edges
we can find, with the help of a permutation tree, another pair of black edges
such that these two pairs intersect.
Feng and Zhu \cite{31} used the following lemma to find such a pair of black edges.
\begin{lemma}\label{lem-12}
(\cite{15})
Let $b_i$ and $b_j$ are two black edges in an unoriented cycle $C$
such that $i<j$.
Let $\pi_k = \max_{i < m \le j} \pi_m$ and $\pi_l = \pi_k + 1$.
Then the black edges $b_k$ and $b_{l-1} $ belong to the same cycle
and the pair $\langle b_k,b_{l-1}\rangle $ intersects the pair $\langle b_i , b_j \rangle$.\qed
\end{lemma}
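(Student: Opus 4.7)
The plan is to trace a short grey-edge neighbourhood in the breakpoint graph and exploit the maximality of $\pi_k$ to force the geometric interleaving. First I would observe that because $\pi_k = \max_{i < m \le j} \pi_m$, the value $\pi_l = \pi_k + 1$ cannot occur at any position inside the interval $(i,j]$; therefore either $l \le i$ or $l > j$. This simple combinatorial fact locates $b_{l-1}$ outside $(i,j]$ and is what will eventually supply the alternating order.

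The next step is to identify the grey edge $g$ joining $r_{\pi_k}$ and $l_{\pi_k + 1} = l_{\pi_l}$. Using the definition of the breakpoint graph, the two black edges incident to the endpoints of $g$ are precisely $b_k$ and $b_{l-1}$ (after accounting for the indexing convention for $b_i$). Because every cycle of $G(\pi)$ alternates colour, any two black edges sharing a common grey edge lie in the same cycle, which settles the first half of the statement.

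For the intersection claim I would do a short case analysis on the position of $l$. Since $k \in (i,j]$, the edge $b_k$ lies between $b_i$ and $b_j$ in the positional order, whereas $b_{l-1}$ lies strictly outside this range in both possibilities ($l - 1 \ge j$ when $l > j$, or $l - 1 < i$ when $l \le i$). Either arrangement, namely $b_i, b_k, b_j, b_{l-1}$ or $b_{l-1}, b_i, b_k, b_j$, matches the alternating pattern $a,c,b,d$ in the definition of ``intersect,'' so the pair $\langle b_k, b_{l-1}\rangle$ intersects $\langle b_i, b_j\rangle$.

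I expect the main obstacle to be the degenerate boundary cases that would collapse the new pair onto the old one: $k = j$ would identify $b_k$ with $b_j$, and $l = j+1$ (or $l = i$) would identify $b_{l-1}$ with $b_j$ (or with a neighbour of $b_i$). The hypothesis that $C$ is unoriented should rule these out, since in the collapsed configurations the putative interleaving would witness a $2$-move on three edges of $C$, contradicting unorientedness. Verifying this exclusion carefully, together with nailing down the exact convention so that the grey edge $g$ really does sit between $b_k$ and $b_{l-1}$ (as opposed to, say, $b_{k+1}$ and $b_l$), is the delicate part of the argument.
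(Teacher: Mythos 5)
The paper never proves this lemma at all: it is imported verbatim (with the \qed in the statement) from Bafna and Pevzner \cite{15}, in the index form used by Feng and Zhu \cite{31}, so there is no in-paper argument to compare you against; what you have written is a reconstruction of the classical proof, and its skeleton is sound. Your three main steps are right: maximality of $\pi_k$ forces the position $l$ of the value $\pi_k+1$ out of $(i,j]$; the grey edge joining $r_{\pi_k}$ and $l_{\pi_k+1}$ has the two black edges $b_k$ and $b_{l-1}$ as its neighbours, and since every vertex carries exactly one black and one grey edge these are consecutive black edges of one alternating cycle; and the position count $i<k\le j$, $l\notin(i,j]$ yields the alternating pattern in either of the two placements of $b_{l-1}$. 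The two ``delicate points'' you flag are exactly where the remaining content lies, and both resolve the way you hope. On the convention: under the paper's own definition ($b_i$ joins $l_{\pi_i}$ and $r_{\pi_{i-1}}$) the grey edge at $\pi_k$ is in fact incident to $b_{k+1}$ and $b_l$; the stated indices $b_k,b_{l-1}$ presuppose the convention $b_m=(r_{\pi_m},l_{\pi_{m+1}})$ of \cite{31}, so the off-by-one is an inconsistency of the paper, not of your argument, but you must commit to the latter convention for the statement to read correctly. On the degeneracies: the only genuine collisions are $b_k=b_j$ (when $\pi_j$ is the maximum on $(i,j]$) and $b_{l-1}=b_j$ (when $l=j+1$); your worry about $l=i$ is harmless, since then $b_{l-1}=b_{i-1}$ is still distinct from $b_i,b_j$ and lies outside the interval. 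Your proposed exclusion mechanism is the correct one: in either collapsed case the grey edge makes the other returned edge the cycle-neighbour of $b_j$, and together with $b_i$ one obtains three black edges of $C$ occurring in the ``oriented'' cyclic pattern, i.e.\ admitting a 2-move on three edges of $C$, contradicting unorientedness (the cleanest way to write this out is via the standard characterization that an unoriented cycle visits its black edges in increasing cyclic order of position). That the hypothesis is really doing work here can be seen in the breakpoint graph of $2\,4\,1\,3$, whose single long cycle is oriented: taking $i=0$, $j=2$ gives $k=j$, and the triple $b_0,b_2,b_4$ indeed admits a 2-move. So your proposal is a correct outline; completing it requires only fixing the indexing convention and writing out the degenerate-case exclusion you sketched.
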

Feng and Zhu suggested that a permutation tree can be used for query and transposition as follows. Assume that the permutation tree $T$ corresponding to a simple permutation $\pi=(\pi_1\ldots \pi_n) $ has been constructed by procedure $Build$. Now, Procedure $Query(\pi, i, j)$ finds a pair of black edges intersecting the pair $\langle b_i , b_j \rangle$ and Procedure $Transposition(\pi, i, j, k)$, applies a transposition
$trans(i,j,k)$ on $\pi$. These two procedures can be implemented as follows.
\begin{enumerate}

\item \textbf{$Query(\pi, i, j)$:} Split $T$ into three permutation trees, $t_1,t_2$ and $t_3$, corresponding to, respectively, $(\pi_1,\allowbreak\ldots, \pi_i), (\pi_i+1, \ldots ,\pi_j)$ and $(\pi_j+1, \ldots
,\pi_n)$. Clearly this can be done in $O(\log n)$ time by two splitting 
operations of $T$. The value of the root of $t_2$ is the largest element
(say, $\pi_k$) in the interval $[\pi_i+1\ldots\pi_j ]$. Assume that $\pi_l =\pi_k+1$. By
Lemma \ref{lem-12}, pair $\langle b_k,b_{l-1}\rangle$ intersects pair $\langle
b_i , b_j\rangle $. 

\item \textbf{$Transposition(\pi, i, j, k)$:} Split $T$ into four permutation trees $t_1, t_2, t_3$ and $t_4$, corresponding to, respectively $(\pi_1, \ldots , \pi_{i-1}),~(\pi_i, \ldots ,\pi_{j-1}),~(\pi_j,\allowbreak\ldots,
\pi_k-1)$ and $(\pi_k, \ldots , \pi_n)$. Then, join
the four trees by executing $Join(Join(Join(t_1, t_3),\allowbreak t_2), t_4)$. Clearly, 
adjusting the permutation tree $T$ can be done by three splitting and three
joining operations spending $O(\log n)$ time.
\end{enumerate}
%

In the rest of this section we state and prove a number of lemmas concerning the running time of different steps of the
the EH algorithm, achieving an $O(n\log n)$ running time for the algorithm in
the sequel.
\begin{lemma}\label{lem-14}
Step~\ref{Step1} of the EH algorithm can be implemented in $O(n)$ time.
\end{lemma}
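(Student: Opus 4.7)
The plan is to decompose Step~\ref{Step1} into three phases and show that each runs in linear time: (i) build a working representation of $\pi$ together with its breakpoint graph; (ii) identify the long cycles and split them by inserting new elements as in~\cite{18}; (iii) relabel the resulting sequence so that it becomes an honest permutation $\hat{\pi}$. Since the transformation of~\cite{18} is taken as given, the job is to exhibit data structures that realise each phase in $O(n)$ time.

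First, I would store $\pi$ as a doubly linked list of its elements and construct $G(\pi)$ in one pass: for each $i$ in order, create the black edge $b_i$ between $l_{\pi_i}$ and $r_{\pi_{i-1}}$ and install the grey edge between $r_i$ and $l_{i+1}$, using a value-indexed lookup array to resolve the endpoints in $O(1)$. This yields $G(\pi)$ in $O(n)$ time with cross pointers linking each black edge to the linked-list positions of its two endpoints in $\pi$.

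Second, I would traverse the cycles of $G(\pi)$ by alternating black and grey edges from each unvisited vertex, tagging every edge with the length of its cycle. Because every vertex has degree two and each edge is touched a constant number of times, this traversal costs $O(n)$. For each long cycle $C$, the construction of~\cite{18} splits $C$ into short cycles by performing a number of insertions linear in $|C|$ at carefully chosen positions of $\pi$. Every single insertion is a constant-time linked-list operation, and the induced surgery on $G(\pi)$ reroutes only a constant number of edges incident to the new vertices; hence the total insertion cost, summed over all long cycles, is proportional to $\sum_C |C| \le n$.

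The third phase replaces the freshly inserted labels by consecutive integers so that $\hat{\pi}$ is a permutation on the correct universe; one pass along the linked list with a running counter, followed by an $O(n)$ pointer update, suffices. The point I expect to need the most care is Phase~(ii): I must argue that the $O(1)$ black edges to be rerouted after a single split can be located without scanning $C$. This falls out once every black edge stores pointers to the linked-list positions of its two endpoints, so that the local surgery reads off in constant time; the claimed $O(n)$ bound then follows by a straightforward charging argument on the black edges.
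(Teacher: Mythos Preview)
Your proposal is correct and follows essentially the same approach as the paper: both rely on the known $(g,b)$-split procedure to break long cycles by inserting new elements into $\pi$, and both argue that the total work is linear because each split is a local, constant-time operation and the number of splits is bounded by the total cycle length. The paper's proof is much terser---it simply asserts that one can ``scan the permutation linearly and insert a new element when necessary''---whereas you spell out the supporting data structures (doubly linked list, cross pointers, cycle traversal) in detail; your three-phase decomposition is a fleshed-out implementation of the paper's one-line sketch rather than a different argument.
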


\begin{proof}
A permutation $\pi$ is made simple by $({g,b})$-splits acting on the breakpoint
graph ${G}(\pi)$. A $({g,b})$-split for ${G}(\pi)$ splits one cycle into two
shorter ones. Equivalently, this operation inserts a new element into
$\pi$~\cite{12}. A breakpoint graph ${G}(\pi)$ can be transformed into
$G(\hat{\pi}) $ containing only 1-cycles, 2-cycles, and 3-cycles by a series of
$({g,b})$-splits~\cite{34}, that is, the permutation corresponding to
$G(\hat{\pi})$ beomces simple. This can be done by scanning the permutation linearly
and inserting a new element when necessary. Thus Step~\ref{Step1} can be
implemented in $O(n)$ time.
\qed
\end{proof}

%
%
%
%
%
\begin{lemma}\label{lem-15}
Step \ref{Step2} of the EH algorithm can be implemented in $O(n\log n)$ time.
\end{lemma}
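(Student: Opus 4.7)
The plan is to reduce Step~\ref{Step2} to $O(n)$ invocations of the $Query$ primitive plus at most two invocations of the $Transposition$ primitive on the permutation tree. Since each call costs $O(\log n)$, the $O(n\log n)$ bound will follow immediately.

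First I would observe that after Step~\ref{Step1} the graph $G(\hat{\pi})$ is simple, so it contains $O(n)$ short cycles and $O(n)$ black edges in total. By the structural analysis underlying the EH algorithm, a $(2,2)$-sequence can exist only when some 2-cycle participates in one of a constant-size catalogue of local configurations with one further short cycle that intersects it; each such pattern is specified by a bounded number of intersecting black-edge pairs, so checking whether the pattern holds at a given cycle requires only $O(1)$ edge-level tests.

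Next I would scan the $O(n)$ 2-cycles of $G(\hat{\pi})$. For a 2-cycle with black edges $b_i$ and $b_j$, a single call to $Query(\hat{\pi},i,j)$ returns, via Lemma~\ref{lem-12}, a pair of black edges that intersects $\langle b_i,b_j\rangle$ and lies in another cycle. A constant number of follow-up queries identifies the cycle type of the partner and decides whether the local configuration matches one of the $(2,2)$-patterns. Summed over all 2-cycles this amounts to $O(n)$ calls to $Query$, hence $O(n\log n)$ time. If a matching pattern is detected, the two transpositions comprising the $(2,2)$-sequence are executed by two calls to $Transposition$, each in $O(\log n)$ time, for a negligible additive cost.

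The main obstacle I foresee is arguing that the search really is linear in the number of cycles, i.e., that one does not have to examine every \emph{pair} of cycles (which would cost $\Omega(n^2)$ queries). Lemma~\ref{lem-12} is exactly what rescues us: given any edge-pair drawn from an unoriented short cycle, a single $Query$ already pinpoints a specific intersecting partner, so in combination with the finite EH catalogue of $(2,2)$-configurations each 2-cycle contributes only $O(1)$ local checks to the overall running time.
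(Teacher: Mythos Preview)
Your reduction rests on the claim that ``a $(2,2)$-sequence can exist only when some 2-cycle participates in one of a constant-size catalogue of local configurations with one further short cycle that intersects it.'' That premise is false, and once it falls the whole linear-scan argument collapses. The paper's case analysis for Step~\ref{Step2} (drawn from the EH paper) has four branches, and two of them are not covered by your scheme:
\begin{itemize}
\item In branch~(d) the permutation is already a 3-permutation, so there are \emph{no} 2-cycles at all; a $(2,2)$-sequence may nevertheless exist, namely when some oriented 3-cycle admits a 2-move after which the resulting graph still contains an oriented cycle. Your scan over 2-cycles never inspects this case.
\item In branch~(c) there are exactly two 2-cycles and they do \emph{not} intersect; here the first 2-move uses three of their four black edges, and the existence of a second 2-move depends on whether \emph{some other} cycle in the graph is oriented afterwards. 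That is a global test, not a local pattern attached to the 2-cycle, so a single $Query$ on $\langle b_i,b_j\rangle$ plus ``a constant number of follow-up queries'' cannot decide it.
\end{itemize}

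The paper's own argument instead runs through the four sub-cases (a)--(d): at least four 2-cycles; two intersecting 2-cycles; two nonintersecting 2-cycles (apply one of four possible transpositions and test for an oriented cycle); and the pure 3-permutation case (for each oriented 3-cycle, apply its 2-move and test for an oriented cycle). The $O(n\log n)$ bound then comes from the fact that sub-steps~(c) and~(d) dominate and amount to $O(n)$ transposition/2-move applications on the permutation tree at $O(\log n)$ each. To repair your proof you would need to replace the ``2-cycle catalogue'' premise with this four-way case split and, in particular, handle the oriented-3-cycle branch explicitly.
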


\begin{proof}
\vspace{-.10in}
To check whether a $(2, 2)$-sequence exists, the following sub-steps are executed:
\begin{enumerate}[(a)]

\item\label{ItemA} We check whether there are (at least) four ${2}$-cycles. If
yes, then we are done; otherwise we go to the next step.

\item\label{ItemB} If there are two intersecting 2-cycles then a $(2, 2)$-sequence
exists and we are done~\cite{36}. Otherwise we go to the following step.

\item\label{ItemC} If there are two nonintersecting 2-cycles, we apply a
transposition on three of the four black edges of the two 2-cycles (check all
four possibilities). Clearly, this is a 2-move~\cite{16}. Now, there is a
(2,2)-sequence iff in the resulting graph there is an oriented cycle.
Otherwise we go to the following step.

\item\label{ItemD} In this case the permutation is a 3-permutation. Here, if all
 cycles are unoriented, there is no (2,2)-sequence. Otherwise, for each oriented
3-cycle, we need to check if, after applying a 2-move on it, there is an
oriented cycle in the resulting graph. There is a (2,2)-sequence iff the answer
is yes for some cycle.
\end{enumerate}
Clearly, the complexity depends  on Sub-steps~\ref{ItemC} and~\ref{ItemD} as
these two cases involve applying the 2-move and the transpositions. Hence the result follows.\qed

\end{proof}

\begin{lemma}\label{lem-17}
Steps~\ref{Step3} and \ref{Step4} of  the EH algorithm can be implemented in $O(n\log n)$ time.
\end{lemma}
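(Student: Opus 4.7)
The plan is to handle Steps~\ref{Step3} and~\ref{Step4} separately and sum the bounds. First I would argue that each single iteration of the while loop in Step~\ref{Step3} runs in $O(\log n)$ time via the permutation tree. Pick any remaining $2$-cycle $C$. If $C$ intersects some other cycle, then running $Query$ on one of $C$'s black-edge pairs produces, by Lemma~\ref{lem-12}, an intersecting pair of black edges; three of these four edges form a $2$-move and can be applied with a single $Transposition$ call. If $C$ does not intersect anything, then (because $\hat{\pi}$ is still simple and hence made only of $1$-, $2$- and $3$-cycles) there must be another $2$-cycle $C'$, and a transposition on three of the four black edges of $C\cup C'$ is a $2$-move, exactly as in Sub-steps~(\ref{ItemC}) and~(\ref{ItemD}) of the proof of Lemma~\ref{lem-15}. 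In either branch the actual work amounts to a bounded number of $Split$ and $Join$ operations on the permutation tree, so each $2$-move costs $O(\log n)$.

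The main obstacle is showing that the loop performs at most $O(n)$ iterations. Here I would use the invariant that every $2$-move increases $c_{odd}(\hat{\pi})$ by exactly $2$, combined with the fact that $c_{odd}(\hat{\pi})\le n$ for any permutation of length $n$. Consequently Step~\ref{Step3} executes at most $O(n)$ $2$-moves, and finishes in $O(n\log n)$ time in total. I would also note that the permutation remains simple throughout Step~\ref{Step3}, so the dichotomy of the previous paragraph (intersecting $2$-cycle, or a pair of nonintersecting $2$-cycles) always applies and no re-simplification is needed inside the loop.

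For Step~\ref{Step4}, once Step~\ref{Step3} terminates $\hat{\pi}$ is a $3$-permutation and $G(\hat{\pi})$ contains only $1$-cycles and $3$-cycles. Marking all $3$-cycles then reduces to a single traversal of the $O(n)$ black edges of $G(\hat{\pi})$; if the cycle labels are maintained alongside the permutation tree during the transpositions performed in Steps~\ref{Step2} and~\ref{Step3}, a single linear sweep suffices. Hence Step~\ref{Step4} runs in $O(n)$ time, and adding the two bounds yields the claimed $O(n\log n)$ total.
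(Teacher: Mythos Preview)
Your overall plan---bound each iteration by $O(\log n)$ and the number of iterations by $O(n)$---is sound, and your treatment of Step~\ref{Step4} and the iteration bound via $\Delta c_{odd}$ match the paper. But your handling of a single iteration of Step~\ref{Step3} is more complicated than the paper's and contains unjustified steps.

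The paper avoids your case split entirely. It uses the fact that a simple permutation has an \emph{even} number of $2$-cycles, finds all of them by one linear scan, and eliminates them in pairs: for any two $2$-cycles there is a $2$-move turning them into a $1$-cycle and a $3$-cycle. This gives at most $n/2$ iterations, each a single $Transposition$ call costing $O(\log n)$; no call to $Query$ is needed in Step~\ref{Step3} at all.

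By contrast, your argument has two gaps. In the non-intersecting branch you write that ``because $\hat\pi$ is still simple and hence made only of $1$-, $2$- and $3$-cycles there must be another $2$-cycle $C'$''; simplicity does not imply this---the relevant fact is the parity of the number of $2$-cycles, which you do not invoke. In the intersecting branch you assert that ``three of these four edges form a $2$-move'', but you do not prove it, and the sub-steps of Lemma~\ref{lem-15} you cite cover only the \emph{non}-intersecting situation. When the cycle $D$ returned by $Query$ is a $3$-cycle, it is not immediate which three of $b_i,b_j,b_k,b_{l-1}$ yield a $2$-move (or that the result is still simple), and this requires a separate argument you have not supplied.
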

\begin{proof}
We have even number of 2-cycles in the breakpoint
graph for a simple permutation \cite{31}. A 2-move in Step \ref{Step3} transforms two
2-cycles into a 1-cycle and a 3-cycle. All the 2-cycles of G($\pi$) can be found
in linear time and be eliminated by at most $\frac{n}{2}$ 2-moves. Since one
transposition takes $O(\log n)$ time, Step~\ref{Step3} can be done in $O(n\log n)$ time. Finally, for Step~\ref{Step4}, all the 3-cycles can be marked by a linear scan of the breakpoint graph  which takes at most $O(n)$ time. Hence the result follows.\qed
\end{proof}

%
%
%
%
%
\begin{lemma}\label{lem-21}
The while loop at Step~\ref{Step5} of the EH algorithm can be implemented in
$O(n\log n)$ time.
\end{lemma}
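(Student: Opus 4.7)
The plan is to show that each iteration of the while loop runs in $O(\log n)$ time and that the loop performs only $O(n)$ iterations, so the total cost is $O(n\log n)$.

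First I would analyze the per-iteration cost. Given a marked 3-cycle $C$, deciding whether $C$ is oriented only requires inspecting the three black edges of $C$ and checking their relative order, which is $O(1)$ after locating them through the permutation tree in $O(\log n)$ time. If $C$ is oriented, the 2-move on $C$ is a single call to $Transposition$, which by the discussion preceding the lemma costs $O(\log n)$. If $C$ is unoriented, the algorithm attempts up to eight extensions; each extension amounts to picking an unintersected pair of black edges in the current configuration (an open gate) and calling $Query$ once to obtain an intersecting cycle. Since at most $8$ cycles are ever involved and each $Query$ costs $O(\log n)$, the whole extension phase costs $O(\log n)$.

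Next I would account for the endgame of an iteration. Once the eight extensions are done (or exhausted because the component turned out to be small), we have a configuration of constant size; testing whether it is a sufficient configuration, and searching for an $\tfrac{11}{8}$-sequence inside it, is a brute-force check on an $O(1)$-size graph and hence $O(1)$ time. Actually applying such a sequence executes at most $11$ transpositions, each $O(\log n)$, so the application costs $O(\log n)$. In the remaining branch we simply unmark the (at most $8$) cycles of the bad small component, which is $O(1)$. Summing, every iteration of the while loop costs $O(\log n)$.

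It then remains to bound the number of iterations. I would argue by a potential-type counting argument on the number of marked 3-cycles, which is at most $n$ initially (by Lemma~\ref{lem-14} the simple permutation $\hat{\pi}$ has $O(n)$ elements and hence $O(n)$ 3-cycles). An oriented 3-cycle iteration applies a 2-move that turns $C$ into cycles that are no longer 3-cycles, so the count drops by at least one. A successful $\tfrac{11}{8}$-sequence sorts the whole (sub)configuration it acts on and therefore removes at least one (in fact, a constant number of) marked 3-cycles. Unmarking a bad small component also removes at least one marked cycle from further consideration, and these unmarked cycles are never re-marked. Hence each iteration strictly decreases the number of marked 3-cycles by at least one, giving at most $n$ iterations in total.

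The main obstacle is really the bookkeeping in the unoriented branch: one must verify that (i) $8$ extensions suffice to either detect a sufficient configuration or certify a small component, so that the extension phase is truly $O(\log n)$, and (ii) unmarked cycles do not re-enter the loop, so that the linear budget on iterations is not broken. Both facts are established in the correctness analysis of \cite{36}, and combined with the $O(\log n)$ per-iteration cost they yield the claimed $O(n\log n)$ bound. \qed
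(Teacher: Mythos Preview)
Your argument is essentially the paper's: bound the loop by $O(n)$ iterations and show each iteration costs $O(\log n)$ via the permutation-tree \emph{Query} and \emph{Transposition} primitives. The one place where your account is thinner than it should be is the extension step. You describe every extension as ``picking an open gate and calling \emph{Query} once,'' but that only covers what the paper calls a Type~1 extension. When the current configuration is \emph{full} (no open gates) yet still has fewer than nine cycles, there is no open gate to hand to \emph{Query}; the paper handles this Type~2 case by iterating over \emph{all} pairs of black edges in the configuration (at most $\binom{24}{2}$, since there are at most eight 3-cycles) and querying each until some pair returns a new intersecting cycle. This is still a constant number of $O(\log n)$ queries, so your $O(\log n)$ per-iteration bound survives, but as written your description does not actually cover the full-configuration case. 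Your iteration-count argument is more explicit than the paper's (which simply asserts ``at most $n$ iterations''), and is fine.
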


\begin{proof} 
The loop iterates at most $n$ times and each iteration takes
$O(\log n)$ time as follows. Step~\ref{Step5a} runs in $O(\log n)$ time, because, to apply a 2-move, we use the transposition operation on the permutation tree. Now, consider Steps~\ref{Step5bStart} to~\ref{Step5bEnd}. There are two types of extensions that are sufficient for extending
any cycle $\mathrm{C}$:
\begin{enumerate}
\item  \emph{Type 1:} Extensions closing open gates, and
\item  \emph{Type 2:} Extensions of full configurations such that the extended
configuration has at most one open gate.
\end{enumerate}
To do a sufficient extension of Type 1 (add a cycle that closes an open gate),
we need to pick an arbitrary open gate and find another cycle that intersects
with the open gate. For this, we query the permutation tree with the black edge
$\langle b_i , b_j\rangle $ of the open gate under consideration. The query
procedure in turn returns the intersecting pair $\langle  b_k,b_{l-1}\rangle  $
as stated above. This step takes  $O(\log n)$ time.

If the configuration is full, i.e., there are no open gates, we do sufficient
extension of Type 2. To do this, we query the permutation tree with each pair of
black edges of each cycle in the configuration, until we find a cycle that
intersects with a pair. If such a cycle is found, we extend the configuration by
this cycle to find a component of size greater than or equal to 9. As there can
be atmost 24 such pairs of black edges, this step takes $O(\log n)$ time as well. Finally, we apply an $\frac{11}{8}$-sequence by using the transposition
procedure of  permutation tree which takes $O(\log n)$ time. Hence the result
follows.\qed
\end{proof}
\begin{lemma}\label{lem-22}
Steps~\ref{Step6} to ~\ref{Step8} of the EH algorithm can be implemented in $O(n\log n)$ time.
\end{lemma}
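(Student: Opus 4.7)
The plan is to analyze Steps~\ref{Step6}, \ref{Step7}, and~\ref{Step8} independently and establish for each a bound of $O(n\log n)$ by combining the $O(\log n)$ cost per transposition on the permutation tree (as used in Lemmas~\ref{lem-21} and~\ref{lem-17}) with an $O(n)$ bound on the number of transpositions actually performed.

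For Step~\ref{Step6}, I would first observe that after Step~\ref{Step5} terminates the only components left in $G(\hat{\pi})$ are bad small components, each of bounded size (at most $8$ cycles). The loop keeps applying an $\frac{11}{8}$-sequence as long as at least $8$ cycles remain. By the $(x,y)$-sequence definition given in Section~\ref{pre}, an $\frac{11}{8}$-sequence consists of exactly $11$ transpositions, each of which can be executed in $O(\log n)$ time via the $Transposition$ procedure on the permutation tree. Since a single sequence eliminates a constant number of cycles and the total number of cycles is $O(n)$, only $O(n)$ such sequences are applied, yielding the desired $O(n\log n)$ bound for Step~\ref{Step6}.

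Step~\ref{Step7} follows by the same accounting: there are $O(n)$ surviving $3$-cycles, each $(3,2)$-sequence performs a constant number of transpositions at $O(\log n)$ cost per transposition on the permutation tree, and each sequence eliminates a constant number of $3$-cycles. For Step~\ref{Step8}, the mimicking merely replays on the original $\pi$ the transposition sequence produced for $\hat{\pi}$ while undoing the $O(n)$ element insertions performed in Step~\ref{Step1}; since at most $O(n)$ transpositions have been generated by the previous steps and each can be translated back by a simple index shift, a linear bookkeeping pass suffices and Step~\ref{Step8} runs in $O(n)$ time.

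The main obstacle I expect is in Step~\ref{Step6}: we must still be able to enumerate the remaining bad small components and the cycles inside them quickly, so that the search for a group of $\ge 8$ cycles does not dominate each iteration. I would handle this by maintaining, alongside the permutation tree, an auxiliary list of the bad small components detected during Step~\ref{Step5} together with a running total of their cycle count; this allows the next group of $\ge 8$ cycles to be selected in amortized constant time. With this bookkeeping in place, the $O(\log n)$ transposition cost dominates each iteration and the claimed $O(n\log n)$ bound is obtained.
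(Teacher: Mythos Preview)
Your argument for Steps~\ref{Step6} and~\ref{Step7} is essentially identical to the paper's: bound the number of iterations by $O(n)$ via the cycle count, and charge $O(\log n)$ per transposition through the permutation tree. Your added remark about maintaining an auxiliary list of bad small components from Step~\ref{Step5} is extra detail the paper omits, but it is consistent with the approach.

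The one divergence is Step~\ref{Step8}. The paper does not argue this step directly; it simply invokes the $O(n\log n)$ mimicking result of Feng and Zhu~\cite{31}. You instead claim an $O(n)$ bound via ``a simple index shift'' and ``linear bookkeeping.'' That justification is thin: after each transposition on $\hat{\pi}$ the positions of both original and inserted elements move, so the correspondence between indices in $\hat{\pi}$ and indices in $\pi$ is not a static offset that a single linear pass can resolve. The standard way to translate each $\hat{\pi}$-transposition back to $\pi$ maintains $\pi$ in its own permutation tree and costs $O(\log n)$ per step, which is exactly what~\cite{31} does. Since your claimed $O(n)$ is inside the target $O(n\log n)$ this does not break the lemma, but the argument as written does not actually establish the bound you assert; citing~\cite{31} (or reproducing its tree-based mimicking) would close the gap.
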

\begin{proof} 
In Step~\ref{Step6}, the application of an $\frac{11}{8}$-sequence takes $O(\log n)$ time and this step iterates at most $O(n)$ times. Now, applying a (3,2)-sequence is essentially equivalent to applying 3 transpositions such that at least 2 of them are 2-moves. Since, there can be no more than $n$ 3-cycles, Step~\ref{Step7} also runs in $O(n\log n)$ time. Hence, the result follows, since Step~\ref{Step8} of the EH algorithm can also be implemented in $O(n\log n)$ time~\cite{31}.\qed
\end{proof}
From the above lemmas, it is easy to see that the EH algorithm, implemented with permutation tree, runs in $O(n\log n)$ time.

\bibliographystyle{abbrv}
\bibliography{reference}

\begin{thebibliography}{1}

\bibitem{15}
V.~Bafna and P.~A. Pevzner.
\newblock Sorting by transpositions.
\newblock {\em SIAM J. Discrete Math.}, 11(2):224--240, 1998.

\bibitem{16}
D.~Christie.
\newblock Genome rearrangement problem.
\newblock Ph.D. Thesis, University of Glasgow, 1999.

\bibitem{36}
I.~Elias and T.~Hartman.
\newblock A 1.375-approximation algorithm for sorting by transpositions.
\newblock {\em IEEE/ACM Trans. Comput. Biology Bioinform.}, 3(4):369--379,
  2006.

\bibitem{33}
H.~Eriksson, K.~Eriksson, J.~Karlander, L.~J. Svensson, and J.~W{\"a}stlund.
\newblock Sorting a bridge hand.
\newblock {\em Discrete Mathematics}, 241(1-3):289--300, 2001.

\bibitem{31}
J.~Feng and D.~Zhu.
\newblock Faster algorithms for sorting by transpositions and sorting by block
  interchanges.
\newblock {\em ACM Transactions on Algorithms}, 3(3), 2007.

\bibitem{18}
Q.-P. Gu, S.~Peng, and I.~H. Sudborough.
\newblock A 2-approximation algorithm for genome rearrangements by reversals
  and transpositions.
\newblock {\em Theor. Comput. Sci.}, 210(2):327--339, 1999.

\bibitem{12}
S.~Hannenhalli and P.~A. Pevzner.
\newblock Transforming cabbage into turnip: Polynomial algorithm for sorting
  signed permutations by reversals.
\newblock {\em J. ACM}, 46(1):1--27, 1999.

\bibitem{34}
T.~Hartman and R.~Shamir.
\newblock A simpler and faster 1.5-approximation algorithm for sorting by
  transpositions.
\newblock {\em Inf. Comput.}, 204(2):275--290, 2006.

\end{thebibliography}
\end{document}